\newtheorem{theorem}{Theorem}
\newtheorem{definition}{Definition}
\newtheorem{remark}{Remark}
\newcommand{\cdatree}{\mathsf{cda\_tree}}
 \newcommand{\cdalist}{\mathsf{cda\_list}}
\newcommand{\tsp}{\mathsf{timestamp}}
 \newcommand{\pr}{\mathsf{price}}
 \newcommand{\q}{\mathsf{qty}}
 \newcommand{\id}{\mathsf{id}}
 \newcommand{\ids}{\mathsf{ids}}
 \newcommand{\iter}{\mathsf{Iterated}}
\title{Efficient and Verified Continuous Double Auctions}
\author{Mohit Garg\thanks{Indian Institute of Science, Bengaluru, India. mohitgarg@iisc.ac.in. Supported by a fellowship from the Walmart Center for Tech Excellence at IISc (CSR Grant WMGT-23-0001) and the SERB Core Research Grant (CRG/2022/001176) on `Optimization under Intractability and Uncertainty'. A part of this work was done while the author was affiliated with the University of Bremen, Germany.} \ and
Suneel Sarswat\thanks{Tata Institute of Fundamental Research, Mumbai, India. suneel.sarswat@gmail.com.}}
\date{}
\begin{document}

\maketitle

\begin{abstract}Continuous double auctions are commonly used to match orders at currency, stock, and commodities exchanges. A verified implementation of continuous double auctions is a useful tool for market regulators as they give rise to automated checkers that are guaranteed to detect errors in the trade logs of an existing exchange if they contain trades that violate the matching rules. We provide an efficient and formally verified implementation of continuous double auctions that takes $O(n \log n)$ time to match $n$ orders. This improves an earlier $O(n^2)$ verified implementation. We also prove a matching $\Omega(n\log n)$ lower bound on the running time for continuous double auctions. Our new implementation takes only a couple of minutes to run on ten million randomly generated orders as opposed to a few days taken by the earlier implementation. Our new implementation gives rise to an efficient automatic checker. 

We use the Coq proof assistant for verifying our implementation and extracting a verified OCaml program. While using Coq's standard library implementation of red-black trees to obtain our improvement, we observed that its specification has serious gaps, which we fill in this work; this might be of independent interest.
\end{abstract}

%%%%%%%%%%%%%%%%%%%%%%%%%%%%%%%%%%%%%%%
\section{Introduction}
Continuous double auctions are used to match buy and sell orders for a particular product at an exchange. For example, they are used at currency, stock, and commodities exchanges. These exchanges generally use computer software to match orders and are required to adhere to regulatory directives that ensure fairness, safety, and transparency. There are multiple reported incidents where exchanges were found violating the regulatory directives or the stated rules~\cite{nyse1,nyse2,ubs,nse}. Bugs in the exchange program can trigger undesirable events leading to significant losses. 

These problems have received attention from the formalization community and form an active research area~\cite{PI17,SS20,RSS21,P21,GS22,clf}. In~\cite{GS22}, a general model of continuous double auctions is considered, and three simple properties of continuous double auctions are identified and shown to be sufficient to uniquely determine the input-output relation, thus yielding formal specifications for such auctions. A natural algorithm for continuous double auctions is then verified in the Coq theorem prover. As a novel application, the verified program obtained is used to build an automated checker that goes over the trade logs of an exchange, comparing the matchings generated by the exchange against the matchings produced by the verified program, and reporting any mismatches. Given that the specifications ensure a unique input-output relation, a mismatch would mean that the exchange program violates the specifications. Such verified programs and accompanying checkers can be extremely useful to the exchanges and the regulators.

\begin{figure}[!htbp]
\begin{minipage}[c]{0.45\textwidth}
\includegraphics[scale=.38]{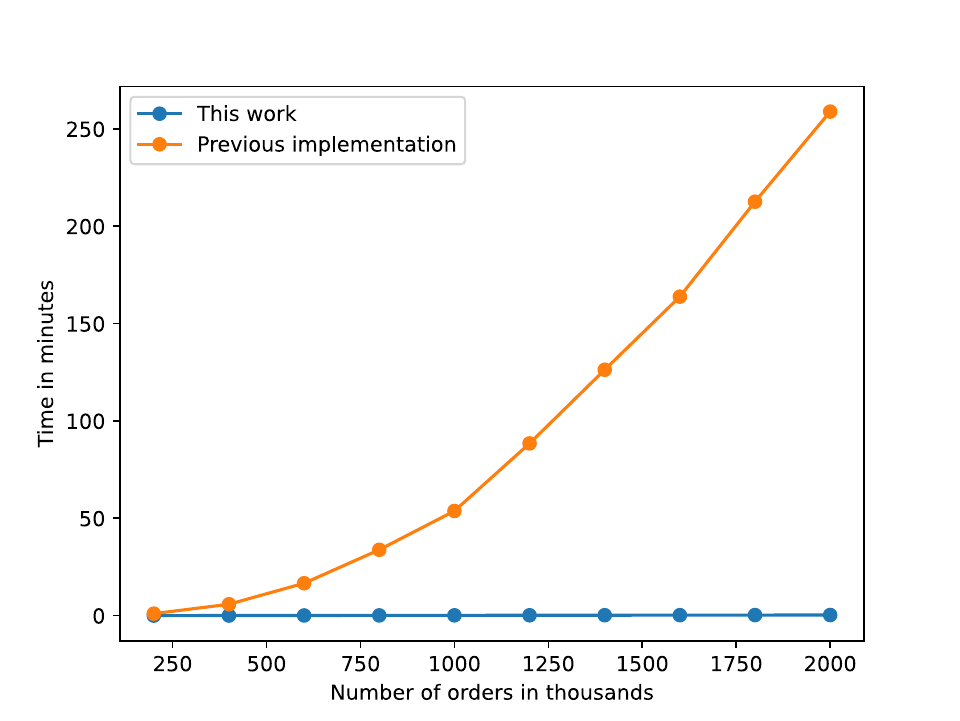}
\end{minipage}\hfill
\begin{minipage}[c]{0.45\textwidth}
\includegraphics[scale=.38]{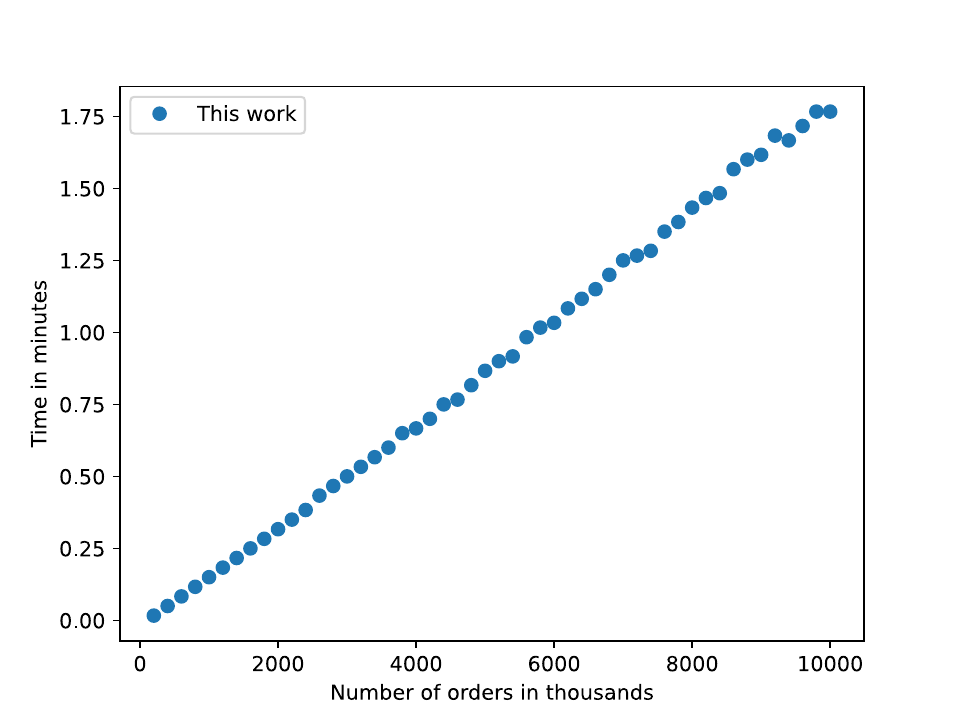}
\end{minipage}\hfill
\end{figure}

The checker obtained in~\cite{GS22} runs reasonably fast on tens of thousands of orders taking only a few seconds and can be useful for products that have a reasonable trade frequency. On the contrary, for products that are traded much more frequently, say receiving ten million orders a day, the checker would take a few days to detect violations, limiting its applicability.
The implementation uses the List data structure resulting in an $O(n^2)$-time implementation for $n$ orders.

In this work, we drastically improve the running time by producing an $O(n\log n)$-time implementation of continuous double auctions using Red-black trees instead of Lists. We formally prove in the Coq proof assistant that the previous verified implementation and our implementation will produce exactly the same output, thus showing adherence to the specifications. 
Furthermore, we prove that any algorithm must take $\Omega(n\log n)$ time showing that the running time of our implementation is asymptotically tight. We run the two implementations on randomly generated data and report the running times. In particular, for ten million orders our implementation takes less than two minutes to run, at least a thousandfold improvement.

Our new implementation, like the previous one, gives rise to an automated checker, which is guaranteed to detect any violation of the specifications from the trade logs if there exists one.

{\bf Organization of the rest of the paper.} In Section 2, we provide a background of continuous double auctions, that we later use to build on.
In Section 3, we state the results that we obtain in this work. In Section 4 we describe the natural algorithm for continuous double auctions with an emphasis on the running time of the previous implementation and the new implementation. In Section 5 we describe our improved implementation. In Section 6 we show  our implementation is efficient and compare the running time of the new implementation with the previous one. Our Coq formalization and experimental demonstrations are available on~\cite{ecda}.
%%%%%%%%%%%%%%%%%%%%%%%%%%%%%%%%%%%%%%%
\section{Background}
We now describe the model of an exchange where a fixed product (for example, a particular stock of a company) is traded.
The exchange receives instructions from the traders (buyers and sellers) of the product. Each instruction comprises of a command and an order. Buy, Sell, and Delete are the possible commands.
 An order $w = (i,t,p,q)$ accompanying a Buy or Sell command comprises of a unique identification number $i$, a unique timestamp $t$, a limit price $p$, and a maximum quantity $q$. A Buy $(i,t,p,q)$ instruction from a trader, say John, means that at time $t$ John requests the exchange to buy $q$ units of the product, and he has a budget of $p$ (cents) per unit of the product. This instruction from John is assigned a unique id $i$ by the exchange. Similarly, a Sell $(i,t,p,q)$ instruction from Mary means that at time $t$ she requests the exchange to sell $q$ units of the product for at least $p$ (cents) per unit. Her Sell order is assigned the id $i$ by the exchange. Orders accompanying Delete commands consist of an id and timestamp, but no price or quantity: $(i,t,*,*)$. Buy and Sell orders are traditionally referred to as bids and asks, respectively.
 
 On receiving an instruction, say a bid Buy $w=(i,t,p,q)$ from John, the exchange instantly matches $w$ with existing unmatched or partially matched asks that arrived before and are still in the system (resident asks) and have a limit price of at most $p$. If there is more than one ask that is matchable with $w$, they are prioritized based on price-time priority (first by the competitiveness of their price and then by their timestamps). Transactions between $w$ and these matchable asks are generated of the largest possible quantity, which is at most $q$. These transactions form a matching. If the matching is of total quantity $q$, John's order is completely exhausted and it leaves the system completely. Otherwise, if the size of the matching is $q' < q$, the system keeps the bid $(i,t,p,q-q')$ as a resident bid for future matches. A resident order leaves the system if its quantity gets exhausted by future matches or if it gets deleted by a Delete command. On receiving an ask, the behavior of the exchange is symmetric, where it matches the ask with the resident bids based on their price-time priority.

The exchange algorithm can be thought of as an online algorithm $P$ that maintains a set of resident bids $B$ and a set of resident asks $A$ and gets one instruction at a time. On receiving a new instruction Command $w$, it generates a matching $M$ and updates the sets of resident bids and asks to $\hat B$ and $\hat A$.

$$ (B, A, \text{ Command } w) \stackrel{P}{\mapsto} (\hat B, \hat A, M)$$

Given the above description of the exchange, it was observed in \cite{GS22} that the following three properties must be satisfied by $P$.

\begin{itemize}
    \item Positive bid-ask spread: The most competitive bid in $\hat B$ has a lower limit price than the most competitive ask in $\hat A$, i.e., no transaction is possible among resident orders.
    \item Price-time priority: If Command $w$ is a bid: if $w$ gets traded with an ask $a'$, then each resident ask $a\in A$ which is more competitive than $a'$ must get fully matched in $M$. A symmetric statement holds when Command $w$ is an ask.
    \item Conservation: The system does honest arithmetic and does not modify orders arbitrarily. For example, if a bid $b=(i,t,p,q)$ gets $q'<q$ quantity traded with $w$, then $(i,t,p,q-q')\in \hat B$.
\end{itemize}

These properties are formally represented in Coq as shown below. We do not expect the reader to fully comprehend the following without going through the formal definitions of the various quantities that appear in the previous tool documentation~\cite{addl}. In what follows, $B'$ and $A'$ are obtained from $B$ and $A$ by adding (or removing) $w$ from $B$ and $A$ depending on whether the accompanying command is Buy or Sell (or delete).

\begin{figure}[!htbp]
\includegraphics[scale=.99]{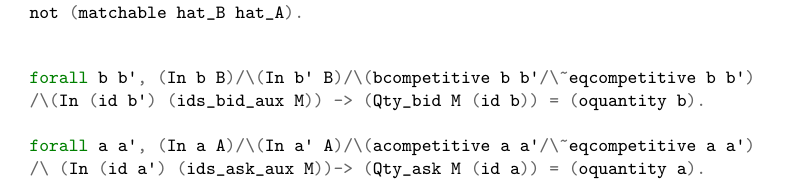}
\end{figure}

\begin{figure}[!htbp]
\includegraphics[scale=.99]{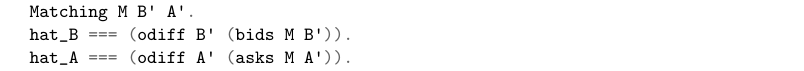}
\end{figure}

The main result in~\cite{GS22} can roughly be summarized as follows. 
\begin{theorem}\label{thm:uniqueness}
Let $P_1$ and $P_2$ be two online algorithms such that each of them satisfies the above three properties. Then, on the same list of instructions as input, at each point in time, $P_1$ and $P_2$ will generate the same matchings.
\end{theorem}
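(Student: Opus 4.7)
The plan is to reduce the theorem to a \emph{single-step uniqueness} lemma and then apply induction on the length of the instruction list. Concretely, I would state: if $P_1$ and $P_2$ both satisfy the three properties and are given identical pre-states $(B,A)$ together with the same instruction $\mathrm{Command}\ w$, then they produce the same matching $M$ and the same post-state $(\hat B,\hat A)$. Granting this lemma, the theorem is immediate by induction on the number of instructions processed so far: the empty prefix gives both algorithms the empty state and no matching, and the single-step lemma carries agreement from step $k$ to step $k{+}1$.

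For the single-step lemma I would split on the command. The Delete case is handled by conservation alone: the affected order must disappear and nothing else may change, so $M=\emptyset$ is forced. The Sell case is symmetric to the Buy case, so the heart of the argument is the Buy case. Let $w$ have price $p_w$ and quantity $q_w$, and let $a_1 \prec a_2 \prec \cdots \prec a_m$ be the resident asks with price at most $p_w$, ordered by price-time priority. I would prove two observations about any matching $M$ produced by a lawful $P$:
\begin{enumerate}
\item[(i)] \emph{Exhaustion.} Either $w$ is fully matched in $M$, or every $a_i$ with $p(a_i)\le p_w$ is fully matched in $M$. Otherwise, a positive remainder of $w$ lies in $\hat B$ and a positive remainder of some $a_i$ lies in $\hat A$ with $p(a_i) \le p_w$, violating positive bid-ask spread.
\item[(ii)] \emph{Prefix structure.} The asks that are fully matched in $M$ form a price-time-priority prefix $a_1,\dots,a_k$, with at most one additional ask $a_{k+1}$ partially matched; if any less competitive ask is touched by $M$, price-time priority applied to that transaction forces every more competitive $a_j$ to be fully matched, contradicting the choice of $k$.
\end{enumerate}

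Combining (i) and (ii), the total traded quantity is pinned to $\min\!\bigl(q_w,\sum_{i=1}^m q(a_i)\bigr)$, and this quantity is consumed greedily down the price-time ordering $a_1,a_2,\dots$, with at most one partial match at the last touched ask. This uniquely determines which bid-ask pairs appear in $M$ and with what quantities, so $M$ itself is unique; conservation then determines $\hat B$ and $\hat A$ from $(B,A,w,M)$. Consequently $P_1$ and $P_2$ agree at the single step, completing the induction.

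The main obstacle is step (ii): price-time priority is phrased transaction by transaction, and one has to argue carefully that a single partially matched ask in the middle of the ordering cannot coexist with any activity on a less competitive ask. This is where one has to be precise about how ``more competitive'' interacts with identity of asks that $w$ trades with, and is the spot that would consume most of the formal effort in Coq; the remaining bookkeeping (the Delete case, the Sell case, and propagating equality of states through the induction) is routine once the lemma is established.
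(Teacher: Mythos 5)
This paper states Theorem~\ref{thm:uniqueness} without proof --- it is imported as a summary of the main result of the prior work~\cite{GS22} --- so there is no in-paper proof to compare against; judged on its own, your proposal is essentially correct and follows the same route as the cited work: a single-step (``local'') uniqueness lemma proved by case analysis on the command, lifted to the full order book by induction, with the post-state equality carried along so the induction goes through. Your exhaustion and prefix-structure observations do pin down the aggregate matching and hence the post-state via conservation. One point you leave implicit and should make explicit: both the Delete case and the Buy case need the \emph{inductive invariant} that the pre-state $(B,A)$ already satisfies positive bid-ask spread --- this is what forces $M=\emptyset$ on a Delete (not conservation alone) and, in the Buy case, what rules out transactions in $M$ between two old resident orders, so that every trade in $M$ really does involve $w$ as you assume when writing $M$ as a sequence of trades of $w$ against $a_1,a_2,\dots$.
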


Thus, the above three properties can be used as specifications for continuous double auctions.

A program is also provided in~\cite{GS22}, namely Process\_instruction, and the following theorem is formally proved.

\begin{theorem}\label{thm:consistency}
Process\_instruction satisfies the above three properties.
\end{theorem}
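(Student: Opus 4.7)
The plan is to verify each of the three properties—positive bid-ask spread, price-time priority, and conservation—by case analysis on the command carried by the instruction. The Delete case is nearly trivial: $\deleteorder$ returns the empty matching and removes the indicated order (if present), so price-time priority is vacuous, conservation holds since no quantity arithmetic takes place, and positive spread is inherited from the previous state, since deleting an order from a book that already has positive spread cannot create a matchable pair. The Buy and Sell cases are symmetric, so it suffices to present Buy and obtain Sell by the mirror argument.

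For the Buy case with incoming bid $w$, $\matchbid$ walks through the resident asks in price-time priority order, emitting trades until $w$ is exhausted or until the head ask is too expensive to match $w$. I would proceed by induction on the loop, maintaining three invariants after $k$ iterations: (a) the residual quantity of $w$ plus the total traded quantity so far equals the original quantity of $w$; (b) every ask touched before the current head was matched to completion, with at most the current head possibly only partially matched; and (c) every ask still residing in the updated book is no more competitive than any ask already fully matched. These invariants yield the three properties at loop termination. Conservation follows from (a) together with the fact that the algorithm never alters prices or identifiers. Price-time priority follows from (b) and (c): if $a' \in M$, all strictly more competitive asks were visited and fully matched. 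Positive spread follows by examining the two exit conditions—either $w$ is completely absorbed (so $\hat B = B$, $\hat A \subseteq A$, and the spread is inherited from $(B,A)$), or the head ask is too expensive, in which case every remaining ask in $\hat A$ has price strictly above the residual of $w$, ruling out a cross between $\hat B$ and $\hat A$.

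The main obstacle will be the spread preservation in the partial-fill subcase, because it requires a global invariant—namely that $(B, A)$ already satisfied the spread property before the instruction was received. Theorem~\ref{thm:consistency} must therefore be embedded inside an outer induction over the entire instruction list, with each instance using the previous spread as a hypothesis and discharging a new spread as part of its postcondition. In Coq the cleanest packaging is to give each helper ($\matchbid$, $\matchask$, $\deleteorder$) a Hoare-style specification including the spread as both pre- and post-condition together with the price-time priority and conservation facts about $M$, and then glue the three helpers together in the top-level proof for Process\_instruction. A secondary bookkeeping obstacle is keeping the loop's extraction order formally synchronised with the total order induced by price-time priority; this is where the underlying data structure (list in~\cite{GS22}, red-black tree here) surfaces, though the mathematical content of the argument is insensitive to the choice.
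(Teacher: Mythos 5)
The paper does not actually prove this theorem: it is imported wholesale from the prior work \cite{GS22}, where it was established as a formal Coq development, and the present paper only cites that result (its own contribution is to show the new tree-based implementation agrees with Process\_instruction output-for-output, so that Theorem~\ref{thm:consistency} transfers). So there is no in-paper proof to compare yours against line by line. That said, your sketch is the natural argument and I see no fatal flaw in it: the case split on the command, the induction on the recursive structure of $\matchbid$/$\matchask$ with the three invariants, and---crucially---the recognition that positive bid-ask spread is not provable for a single step in isolation but must be threaded as a pre- and post-condition through an outer induction on the instruction list, are all exactly the structure one needs (and the paper's statement of the properties, which quantifies over admissible $(B,A)$ pairs, reflects this). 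Two points where your sketch is thinner than a real proof would need to be: first, the conservation property as stated in the paper constrains the \emph{resident} orders (a partially traded ask $(i,t,p,q)$ with $q'$ traded must reappear as $(i,t,p,q-q')$ in $\hat A$), whereas your invariant (a) only tracks the incoming order $w$; you need a companion invariant for the at-most-one partially matched head ask. Second, in the general ``structured order book'' setting the id of a Buy/Sell may coincide with the id of an immediately preceding Delete, so the uniqueness-of-ids bookkeeping that your priority and conservation invariants silently rely on needs to be carried as an additional invariant of the outer induction. Neither is a conceptual obstacle, but both generate substantial formal overhead in Coq, which is presumably why the paper leans on the existing verified development rather than reproving it.
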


The exchange maintains two logbooks. All the instructions received by the exchange are kept in an order book (sorted by their timestamps) and the corresponding matchings that get generated are maintained in a trade book. As mentioned earlier, a verified program such as Process\_instruction can also be used to build automated checkers that detect violations in trade logs of existing exchanges in an offline mode (like at the end of the day of trading). In fact, the above theorems work in a slightly more general setting where the id of a buy or sell order can be equal to the id of an immediately preceding delete instruction. This allows one to implement Update instructions. Using this slightly general model,~\cite{GS22} implements a checker that can handle more complex orders like Updates, Immediate-or-Cancel orders, market orders, and stop-loss orders by adding a preprocessing step where complex instructions are converted into the three primitive types: Buy, Sell, and Delete. 

Our results hold for this general setting, and all their results still apply in our faster implementation of Process\_instruction.

\section{Our contributions}
We provide a new and efficient implementation of Process\_instruction, which we call eProcess\_instruction, which stores the set of resident bids and asks as red-black trees instead of lists.
We expected changing lists to red-black trees in the previous implementation to be a straightforward task. But unfortunately, it was not so. While the new implementation needed little innovation (like keeping two trees with different keys instead of one, as we will see later), proving the correctness needed both conceptual and technical work. If one carefully observes the specifications for the online process obtained in the previous work~\cite{GS22}, it is formulated in terms of lists. Thus, it is not directly possible to prove the correctness of the implementation that uses trees; one option would be to recast the specifications in terms of trees, but then one would need to prove that the two specifications are equivalent in some sense. Instead, and this is our conceptual innovation, we prove that for any order book, the new and old processes have precisely the same outputs at each point in time, piggybacking on the formal correctness of the previous implementation. Note that we do not prove that our algorithm satisfies the specification directly. Our proof needs to delicately factor in the definition of a `structured' order book, which we explain later.

 On the technical side, we found that the semantic guarantees provided for red-black trees in the standard library implementation of Coq have serious omissions, making the standard library implementation of red-black trees unsuitable for black-box use. Interestingly, we could not find other works using this standard library implementation. We work through the standard library implementation of red-black tees and prove the required guarantees. Note that, through this approach, the running time guarantees of the standard library implementation are thus retained. Others can benefit from our proofs when working with the standard library implementation of red-black trees.
%%%%%%%%%%%%%%%%%%%%%%%

To describe our main results more formally, we first need the following definition.

\begin{definition}[$\cdatree$, $\cdalist$]
Given an order book I, and a natural number $k \leq \mathsf{length}(I)$, let $\cdalist(I,k)$ denote the $k^{\text{th}}$ matching output by Process\_instruction, when it is run on the order book $I$, that is, the matching outputted when it processes the $k^{\text{th}}$ instruction from $I$.
Similarly, $\cdatree(I,k)$ represents the $k^\text{th}$ matching outputted by eProcess\_instruction when run on the order book $I$.
\end{definition}

We are now ready to state our main result.

\begin{theorem}\label{thm:main}
For all order books $I$, and natural numbers $k\leq \mathsf{length}(I)$,\\ $\cdatree(I,k) = \cdalist(I,k).$
\end{theorem}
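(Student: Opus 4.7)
The plan is to prove the equality $\cdatree(I,k) = \cdalist(I,k)$ by induction on $k$, not by showing directly that eProcess\_instruction satisfies the three specifications (which would go via Theorem \ref{thm:uniqueness}), but by a simulation argument between the two implementations. To set this up I would introduce an invariant $\Phi$ relating the intermediate state of Process\_instruction after $k$ steps, which is a pair $(B_k, A_k)$ of lists sorted by price-time priority, with the intermediate state of eProcess\_instruction, which is a pair $(T_{B,k}, T_{A,k})$ of red-black trees whose keys encode the price-time priority for bids and asks respectively. The invariant should assert that an in-order traversal of $T_{B,k}$ (in the priority direction) yields exactly $B_k$, and analogously for the ask side. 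The ``structured order book'' condition mentioned in the excerpt is precisely what makes this correspondence meaningful and preservable step by step; in particular it constrains the shape of admissible intermediate states so that the two trees, with their different comparators for bids (highest price first) and asks (lowest price first), remain coherent with a single list-based view of the book.

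With $\Phi$ in hand, the induction base is trivial (both implementations start with empty book and empty trees), and the inductive step requires two things: (i) if $\Phi$ holds after step $k-1$ and both processes receive the same $k^{\text{th}}$ instruction, then they emit the same matching at step $k$; and (ii) $\Phi$ continues to hold after step $k$. I would case-split the inductive step on the command type (Buy, Sell, Delete). For the matching cases, where the new order is greedily matched against the most competitive resident orders on the opposite side, I would prove a ``one-round'' simulation lemma: extracting the most competitive ask from $T_A$ returns the same order as the head of $A$; generating the resulting transaction is identical; and the updated state (either deleting the ask or re-inserting a quantity-reduced ask, plus possibly re-inserting the residual bid) remains related by $\Phi$. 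Closing the recursion over the matching loop then gives step (i) and (ii) simultaneously. The Delete case is analogous but simpler.

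The main obstacle is the red-black tree layer. The excerpt flags that Coq's standard library red-black trees have ``serious gaps'' in their semantic specification, so the lemmas I would want to cite as black boxes -- that insertion, deletion, and minimum/maximum extraction agree with their abstract multiset semantics, that an in-order traversal indeed enumerates keys in the intended order, and that these operations all preserve the correspondence with the priority-sorted list -- must first be reproved directly against the library's inductive definitions. This shoring-up of the library is what lets the invariant $\Phi$ be preserved by each primitive operation while retaining the library's $O(\log n)$ running-time guarantees. A secondary but delicate point is handling the two trees keyed differently on the bid and ask sides: the matching loop alternately consults both, and one must be careful that re-insertion of a partially consumed order on one side does not break the structured-order-book condition relating that side's tree to that side's list. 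Once these lemmas are in place, the induction proceeds cleanly and yields the theorem.
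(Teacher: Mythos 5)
Your proposal matches the paper's approach: the authors likewise avoid re-verifying the specification for the tree-based implementation and instead prove, by induction over the (structured) order book, that eProcess\_instruction and Process\_instruction produce identical matchings and corresponding resident states at every step, piggybacking on the already-verified list implementation and requiring exactly the state-correspondence invariant and the strengthened red-black-tree library lemmas you describe. One detail you under-specify: the implementation keeps \emph{two} trees per side (one keyed by competitiveness, one by id, so that both insertion and deletion run in $O(\log n)$), so your invariant $\Phi$ must additionally assert that the two trees on each side hold the same set of orders as the corresponding list --- otherwise the Delete case, which consults the id-keyed tree, cannot be discharged.
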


This theorem appears in our formalization as follows.

\begin{figure}[!htbp]
\includegraphics[scale=.99]{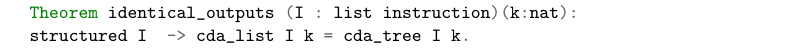}
\end{figure}

The condition that the order book $I$ is structured captures the fact that the timestamps of the orders in $I$ are increasing, and the ids of all bids and asks are distinct, except if it is preceded by a delete instruction, in which case its id could be the same as the id of the preceding delete instruction. As explained earlier, this relaxation allows one to implement more complex instruction types by converting them into the three primitive types, which is useful for certain exchanges.

eProcess\_instruction satisfies the specifications is immediate by combining Theorems~\ref{thm:main} and~\ref{thm:consistency}. Next, we state the time complexity results we obtain.

\begin{theorem}\label{thm:runningTime}
eProcess\_instruction when run on an order book of length $n$ takes $O(n\log n)$ time.
\end{theorem}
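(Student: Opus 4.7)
The plan is to bound the total work of eProcess\_instruction by an aggregate (amortized) argument over the $n$ instructions. I would first decompose the per-instruction work by cases. On a Delete command the algorithm performs a constant number of lookup-and-remove operations in the red-black trees holding the order of the given id, costing $O(\log n)$. On a Buy (symmetrically Sell) command the algorithm runs an iterated matching loop against the tree of resident asks: in each iteration it extracts the most competitive resident ask, emits one transaction, and then either fully consumes that ask (removing it from the tree) or partially consumes it (updating its residual quantity) and exits; if the incoming bid still has residual quantity when the loop ends, it is inserted into the resident-bid tree.

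Next, I would invoke the fact that every red-black tree primitive used here---find-min/find-max, insert, delete, update by key---runs in $O(\log n)$ time when the tree contains at most $n$ entries, and that the two trees of bids and asks each contain at most $n$ orders at any time. These bounds are inherited from Coq's standard-library red-black tree implementation (and the re-verified interface described in Section~3). The per-iteration bookkeeping outside the tree call (pattern matching, constructing the new transaction record, prepending to the output list) is clearly $O(1)$, so each iteration of the matching loop costs $O(\log n)$.

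The central step is then a charging argument. To each instruction I would assign at most a constant number of ``terminating'' tree operations---namely inserting the residual of the incoming order, updating the last, partially consumed resident order, or performing an explicit delete---together with some number $r_k$ of ``full-removal'' operations on resident orders that were completely consumed in the $k$-th iterated match. A resident order can enter the trees at most once, namely when its own Buy/Sell instruction leaves a residual after matching, so it can be fully removed at most once. Hence $\sum_{k=1}^{n} r_k \le n$, the algorithm performs $O(n)$ elementary tree operations in total, and the overall running time is $O(n \log n)$.

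The main obstacle, given that this is a verified implementation, is that Coq itself carries no intrinsic cost model, so a proven time bound must be stated for the extracted OCaml program and requires one to audit that extraction preserves asymptotic costs---in particular that the $O(1)$ per-iteration bookkeeping really is constant after extraction, and that the standard-library tree routines retain their $O(\log n)$ cost. A secondary subtlety is the ``structured order book'' relaxation (an id may be reused immediately after a matching Delete), which must be handled by observing that even under this relaxation each individual resident order is still inserted into the trees at most once, so the charging argument above is unaffected.
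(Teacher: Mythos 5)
Your proposal is correct and takes essentially the same route as the paper: both arguments charge each $O(\log n)$ iteration of the matching loop either to the incoming instruction (a constant number of ``terminating'' tree operations) or to a resident order that is permanently removed, and since each order leaves the system at most once, the total number of tree operations is $O(n)$. The paper phrases this charging as ``each recursive Match\_ask call is triggered by a bid that gets fully exhausted, and a single bid can trigger at most one such call,'' which is interchangeable with your ``fully removed at most once'' accounting.
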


\begin{theorem}\label{thm:lowerBound}
Any algorithm that implements continuous double auctions has running time $\Omega(n \log n)$, where $n$ is the length of the order book.
\end{theorem}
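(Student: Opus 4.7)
The plan is to reduce comparison-based sorting to continuous double auctions. Given $n$ distinct positive numbers $p_1, \ldots, p_n$ to be sorted, I would build an order book $I$ of length $2n$: for $i = 1, \ldots, n$ insert a Buy instruction with order $(i, i, p_i, 1)$, and then for $j = 1, \ldots, n$ insert a Sell instruction with order $(n+j, n+j, 0, 1)$. Constructing $I$ takes $O(n)$ time, and shifting the inputs to be positive (when necessary) does not affect their sorted order.

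The first $n$ Buy instructions produce empty matchings, since no asks are resident when they arrive. For the $j$-th Sell instruction (for $1 \le j \le n$), I claim the output is a single trade between the incoming ask and the highest-priced resident bid. The incoming ask has price $0$ and unit quantity, so by the positive bid-ask spread specification it cannot coexist as a resident with any bid of positive price; hence it must trade with some resident bid $b'$. By price-time priority, every resident bid strictly more competitive than $b'$ must be fully matched in the resulting matching; since the ask has quantity $1$, only $b'$ is matched, so no resident bid is strictly more competitive than $b'$, i.e., $b'$ has the highest price among the current resident bids. By induction on $j$, the $j$-th Sell identifies the $j$-th largest value among $p_1, \ldots, p_n$. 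Reading the $n$ outputs in order recovers the full sorted sequence in $O(n)$ additional time.

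Any algorithm that solves CDA in time $T(n)$ therefore yields a sorting algorithm for $n$ numbers running in time $T(2n) + O(n)$. Since comparison-based sorting requires $\Omega(n \log n)$, we conclude $T(n) = \Omega(n \log n)$. The main subtlety will be pinning down the model of computation so that the sorting lower bound transfers: a CDA algorithm must resolve price-time priority by comparing prices, so the comparison model (or more generally an algebraic decision tree model on prices) is the natural setting, matching the model in which the upper bound of Theorem~\ref{thm:runningTime} is proved. In more restricted models, for instance with small bounded integer prices admitting radix or bucket sort, both the sorting lower bound and the CDA lower bound weaken together.
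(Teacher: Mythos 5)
Your proposal is correct and follows essentially the same reduction as the paper: $n$ unit-quantity bids priced at the input values followed by $n$ unit-quantity zero-price asks, so that the trade log emerges in sorted order of price, and then the comparison-sorting lower bound applies. Your write-up is somewhat more careful than the paper's in deriving the matching behavior from the stated specifications (positive bid-ask spread plus price-time priority) and in flagging the model-of-computation caveat, but the underlying argument is identical.
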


Apart from the above results, we add several lemmas that strengthen the guarantees provided in Coq's standard library for red-black trees. This contribution is explained in detail in Section~\ref{sli}.

Our formalization of Theorem~\ref{thm:main} builds on the earlier formalization of~\cite{GS22}. We add about 1500 lines of new Coq code with about 100 new 
lemmas, theorems, and definitions. We use Coq's program extraction feature to obtain an OCaml program for eProcess\_instruction. We use Coq 8.17.1~\cite{coq} for compiling our code. We randomly generate order books of various sizes using a python script, run the extracted verified OCaml programs of eProcess\_instruction and Process\_instruction on it, and report the running times. We include the Coq formalization and the scripts that enable a demonstration of running the extracted programs on the randomly generated data as part of the accompanying materials~\cite{ecda}.
%%%%%%%%%%%%%%%%%%%%%%%%%%%%%%%%%%%%%%%%%%%%%%%%%%
\section{Algorithm for continuous double auctions}

We first describe the algorithm for Process\_instruction as used in~\cite{GS22} and analyze its running time. The algorithm takes as input the set of resident bids B, the set of resident asks A, and an Instruction $\tau$. Depending on whether $\tau$ is a Delete, Buy, or Sell instruction, an appropriate subroutine is called. In the implementation, $B$ and $A$ are kept as sorted lists.

\begin{algorithm}[H]
\caption{Process for continuous double auction}\label{label:processInstruction}
\begin{algorithmic} 
\Function{Process\_instruction}{Bids $B$, Asks $A$, Instruction $\tau$}
\If {$\tau= \mathsf{Del} \ id$} $\mathsf{Del\_order}(B,A,id)$
\EndIf 
\If {$\tau = \mathsf{Buy} \ \beta$}  $\mathsf{Match\_bid}(B,A,\beta)$
\EndIf
 \If {$\tau = \mathsf{Sell} \ \alpha$}
 $\mathsf{Match\_ask}(B,A,\alpha)$
 \EndIf
\EndFunction    
\end{algorithmic}
\end{algorithm}

Next, we present the Match\_ask and Del\_order subroutines as they appear in~\cite{GS22}. Match\_bid is symmetric to Match\_ask and we do not present it explicitly here.

\begin{algorithm}[H]
\caption{Matching an ask}\label{alg:match}
\begin{algorithmic} 
\Function{Match\_ask}{Bids $B$, Asks $A$, order $\alpha$} \Comment{$\alpha$ is an ask.}
\If{$B = \emptyset$} \Return $(B, A\cup\{\alpha\}, \emptyset)$
	\EndIf\\
	\State $\beta \gets \mathsf{Extract\_most\_competitive}(B)$ \Comment{Note: $B \gets B \setminus \{\beta\}$.}\\
	\If {$\pr(\beta) < \pr(\alpha)$}
	     \Return $(B\cup\{\beta\}, A\cup\{\alpha\}, \emptyset)$
	 \EndIf \\
	 
	 \Comment{From now on $\beta$ and $\alpha$ are tradable.}
     \\
	    \If {$\q(\beta) = \q(\alpha)$} \\
	         \hspace{50pt}$m \gets (\id(\beta),\id(\alpha),\q(\alpha))$
	        \State \hspace{17pt}\Return $(B,A,\{m\})$
	       \EndIf
	       \If{$\q(\beta) > \q(\alpha)$}
	             \\ \hspace{50pt}$m \gets (\id(\beta),\id(\alpha),\q(\alpha))$ 
	            \State \hspace{17pt} $B'\gets B\cup\{(\id(\beta), \tsp(\beta), \q(\beta)-\q(\alpha),\pr(\beta))\}$ 
	            \State \hspace{17pt}\Return $(B', A, \{m\})$
	       \EndIf
	       
	    \If {$\q(\beta) < \q(\alpha)$} \\
	    	 \hspace{50pt}$m \gets (\id(\beta), \id(\alpha), \q(\beta))$
	        \State \hspace{17pt}$\alpha'\gets (\id(\alpha),\tsp(\alpha),\q(\alpha)-\q(\beta),\pr(\alpha))$
	        \State \hspace{17pt}$(B', A', M') \gets \mathsf{Match\_ask}(B,A,\alpha')$
	        \State \hspace{17pt} $M \gets M'\cup\{m\}$
	        \State \hspace{17pt}\Return $(B', A', M)$
	     \EndIf

\EndFunction
\end{algorithmic}
\end{algorithm}

Match\_ask takes as input the sets of resident bids and asks $B$ and $A$ (implemented as lists that are kept sorted based on competitiveness), and an ask $\alpha$.
Match\_ask first extracts the most competitive bid $\beta$ from $B$, which happens to be at the topmost element of the list $B$, as $B$ is sorted (which takes $O(1)$ time).
If the limit price of $\beta$ is less than that of $\alpha$, then $\alpha$ is not matchable with any of the orders in $B$. Thus, $\beta$ is inserted back in $B$ (this again takes $O(1)$ time, and $B$ remains sorted) and $\alpha$ must become a resident ask, and is inserted in $A$, which is a sorted list. In the implementation, $\alpha$ is inserted in such a way that the resulting list remains sorted. This takes $O(|A|) = O(n)$ time (assuming there are at most $n$ instructions in total, $|A| + |B| \leq n$) and $\Theta(n)$ time in the worst case (for example, when $\alpha$ needs to be inserted at the middle of the list). This already shows that Process\_instruction's implementation takes at least $\Omega(n)$ time.

Else, $\beta$ is matched with $\alpha$. If the quantity of $\beta$ is at least the quantity of $\alpha$, then $\alpha$ gets completely traded with $\beta$. $\beta$ with its remaining quantity if any is inserted back to $B$. This entire step takes $O(1)$ time.

Finally, if the quantity of $\beta$ is less than $\alpha$, then $\beta$ gets completely matched with $\alpha$. We recursively call match\_ask with the remaining quantity of $\alpha$ with the resident bids and asks $B$ and $A$ (note that $\beta$ is not in $B$ anymore, as it was extracted out at the very beginning). The time taken in this entire step is $O(1)$ plus the time taken by the recursive call. The entire recursion can take at most $O(n)$ time, since the size of $B$ decreases by $1$ for each recursive call.

The Delete instruction simply searches the id in the resident orders and deletes all elements with that id (note that there cannot be more than one such item). This step also takes $O(n)$ time.

\begin{algorithm}[H]
\caption{Deleting an order}\label{alg:delete}
 \begin{algorithmic} 
 \Function{Del\_order}{B, A, id}
 	    \If {$id \in \ids(B)$}
              $B \gets \mathsf{remove}(B, id)$	    
 	    \EndIf
         \If {$id \in \ids(A)$}
              $A \gets \mathsf{remove}(A, id)$	        
 	    \EndIf
 	        \State \Return $(B,A,\emptyset)$
 \EndFunction
 \end{algorithmic}

\end{algorithm}

In total the running time of the list implementation for processing one instruction is $O(n)$. Thus, for processing $n$ instructions, it will take $O(n^2)$ time. The main bottlenecks are inserting an order in a sorted list and deleting an order from the list. The recursive step in Match\_ask also seems to take $O(n)$ time, but one can do a better analysis for multiple instructions to get an improved amortized running time. 

The above online algorithm is used for just processing a single instruction. When we run an online algorithm $P$ repeatedly on an order book $I$, the output matching generated after processing the $k^\text{th}$ instruction is given by the following recursive program of~\cite{GS22}.

\begin{algorithm}[H]
\caption{Iteratively running a process on an order-book}\label{process}
\begin{algorithmic} 
\Function{$\iter$}{Process $P$, Order-book $\mathcal I$, natural number  $k$} \\
\Comment{{promise: $k\leq\mathsf{length}(\mathcal I)$}}
% if k>len(I)
\If {$k=0$} 
\Return $(\emptyset, \emptyset, \emptyset)$
\EndIf 

    \State $(B,A,M) \gets \iter(P, \mathcal I, k-1)$ \\
    \Comment{Note: $B$ and $A$ are the resident bids and asks and $M$ is the matching outputted at time $k-1$.}
    \State $\tau \gets k^{\text{th}} \text{ instruction in } \mathcal I$

    \State \Return $P(B,A,\tau)$
\EndFunction    
\end{algorithmic}
\end{algorithm}

The function $\cdalist$ defined earlier is then obtained by setting $P$ to Process\_instruction.

\subsection{Improved implementation using balanced binary search trees}

We next show how the above-mentioned bottlenecks can be removed by using  balanced binary search trees (BSTs) to store the resident orders, where insertions and deletions take $O(\log n)$ time.

We store the resident bids and asks as BSTs. Note that in Match\_ask the insertion of the ask $\alpha$ in $A$ is done based on competitiveness, whereas in Del\_order the deletion is done using the id. We do not know how to implement insertion  based on one key (competitiveness) and deletion based on another key (id) in $O(\log n)$ time each in the same BST. Consequently, we will keep two BSTs for the same set of resident asks, one for competitiveness and one for id. Similarly, we will have two BSTs for the resident bids. With this trick of keeping two trees for the same set of elements, we can bring down the cost of insertion and the cost of deletion from $O(n)$ to $O(\log n)$.
However, the time to extract the most competitive order (for example $\beta$ from $B$ in the first step of Match\_ask) increases from $O(1)$ to $O(\log n)$. But, this will not hurt the asymptotic running time. 

We always keep the two BSTs for the resident asks (bids) the same. When inserting an order, we insert it in both the trees, which takes $O(\log n)$ time. When deleting an order given by a Delete id command, we first search for the order in the id tree. If we find an order
with that id, we get the order's price and time and search for it in the other tree (whose key is determined by price and time), and then delete that order from both trees, which takes $O(\log n)$ time in total. Similarly, when extracting the most competitive ask, we first extract it from the tree that is ordered by competitiveness. Once the most competitive ask is extracted, we have its id, and then we can easily extract it from the id tree. In total, this takes $O(\log n)$ time.

We now prove that the running time of our
 implementation of the algorithm when it is repeatedly applied to $n$ instructions one after the other is $O(n\log n)$.

\begin{proof}[Proof of Theorem~\ref{thm:runningTime}]
Since there are $n$ instructions in total, the number of resident orders at any point is $|A|+|B|\leq n$. Thus, each insertion, deletion, and extracting the most competitive order takes $O(\log n)$ time.

First notice that there can be at most $n$ Delete instructions, each taking $O(\log n)$ time, needing $O(n\log n)$ time in total.

Now, we will bound the running time for all Match\_ask calls. A similar bound holds for Match\_bid.

Match\_ask calls can be triggered in two ways. Either the call is triggered by a Sell $\alpha$ instruction or it is a recursive call made inside another Match\_ask call. In the latter case, observe that a most-competitive bid $\beta$ gets completely exhausted and leaves the system while executing the calling Match\_ask function. We will say this $\beta$ triggers the recursive call of Match\_ask. Note that a single bid can trigger at most one Match\_ask recursive call.

Time taken to execute a Match\_ask call (not counting the time taken for executing the recursive calls if any) is the time taken for extracting the most-competitive bid $\beta$ and the time taken to insert back the modified $\beta$ in $B$ or the modified ask $\alpha$ in $A$, which in total is $O(\log n)$. As noted above, a Match\_ask is either triggered by a Sell $\alpha$ instruction, or by a bid $\beta$. Since there are at most $n$ instructions and at most $n$ bids, the total time taken by all Match\_ask calls put together is $O(n\log n)$. Note that it might happen that just a single instruction takes $O(n \log n)$ time by itself. But this cannot happen for every instruction.

Thus, Theorem~\ref{thm:runningTime} follows immediately. 
\end{proof}

%%%%%%%%%%%%%%%%%%%%%%%%%%%%%%%%%%%%
\section{Implementation with red-black trees}

Our implementation, namely eProcess\_instruction, thus has the following type signature:
$$ (B, B_{id}, A, A_{id}, \text{Command } w) \mapsto (\hat B, \hat B_{id}, \hat A, \hat A_{id}, M),$$
where $B$ and $B_{id}$ are the BSTs for the resident bids with competitiveness and id as the keys, respectively. Similarly, $A$ and $A_{id}$ are the BSTs for the resident asks. The output also contains the four trees for the resident bids and asks after the instruction command $w$ is processed, and the outputted matching $M$, which is maintained as a list.

In our Coq implementation, eProcess\_instruction is defined as follows.

\begin{figure}[!htbp]
\includegraphics[scale=.99]{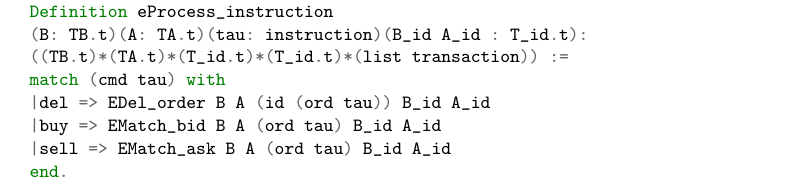}
\end{figure}

Note that since the type signatures of Process\_instruction and eProcess\_instruction are different, we cannot formally show that they are semantically the same. Thus, we show that when we run them on an order book, they have exactly the same outputs at each point in time. For this, we define a version of Iterated for eProcess\_instruction which does exactly the same thing as Iterated to obtain $\cdatree$ and show that $\cdatree$ and $\cdalist$ are semantically the same.

\begin{remark} We make use of the Equations plugin~\cite{equations} for implementing EMatch\_ask and EMatch\_bid smoothly, which was not critical in the previous implementation. Since our new implementation uses red-black trees, and the remove operation on red-black trees in the standard library implementation is not structurally recursive, several additional proof obligations get generated. These obligations are mitigated by using Equations.
\end{remark}

\subsection{Strengthening the standard library guarantees}\label{sli}
For our BSTs, we use the standard library implementation of red-black trees~\cite{redblack}, which implements insertion, extraction, and deletions in $O(\log n)$ time.

However, we could not use the implementation in a black-box manner. To illustrate this consider the following lemma included in the standard library.

\begin{figure}[!htbp]
\includegraphics[scale=.99]{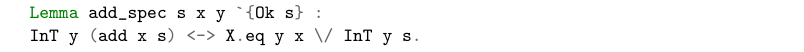}
\end{figure}

Here, $s$ is a red-black tree ($\mathsf{Ok}$ $s$ asserts that), $x$ and $y$ are two elements, $\mathsf{add}$ is a function that inserts $x$ in $s$ and returns the resulting tree.

The above lemma states that an element $y$ is {\it In} the tree after inserting an element $x$ iff $y$ was {\it Equal} to the element $x$ or $y$ was already {\it In} the tree before the insertion of $x$. 

Here, {\it In} and {\it Equal} correspond to $\mathsf{InT}$ and $\mathsf{X.eq}$ in the above lemma, and they do not carry the usual meaning of in and equal.
$\mathsf{InT}$ $a$ $s$ means that the key of $a$ is in the tree $s$, and not that $a$ is in $s$, unless the keys of the elements are the elements themselves.

For our application, the key and the element are not the same; for example, the key is just the id, whereas the element is the entire order.

For using red-black trees, it is easy to imagine situations where one needs the following guarantee. Assume there is an element $x$ whose key is not in the current tree $s$. Then, $x$ will be part of the tree $(\mathsf{add} \ x \ s)$. Furthermore, the other elements of $(\mathsf{add} 
 \ x \ s)$, other than $x$, are precisely the elements of $s$. Such a guarantee is not currently available in the standard library.

One thus requires a stronger statement to prove the correctness of the insert operation $\mathsf{add}$, which was missing in the standard library implementation of red-black trees. Similar issues are there with the specification lemmas for other operations like $\mathsf{remove}$.
Thus, in our formalization, we had to go through the implementation in detail to prove the stronger results needed in our application.
In particular, we show the following where $\mathsf{Intree}$ $y$ $s$ means $y$ is an element of the tree $s$.

\begin{figure}[!htbp]
\includegraphics[scale=.99]{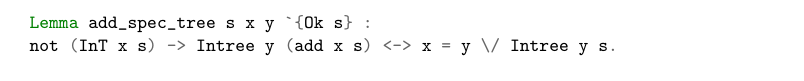}
\end{figure}

The above lemma captures the situation described above. If the key of $x$ is not in the tree $s$, then an element y is part of $(\mathsf{add} \ x \ s)$ if and only if either $x$ and $y$ are the same elements or $y$ is an element in the tree $s$.

Similarly, we strengthen specifications relating to the remove operation and the elements function which outputs a list. They are as follows. Our strengthened lemmas have a suffix of `\_tree'.

\begin{figure}[!htbp]
\includegraphics[scale=.99]{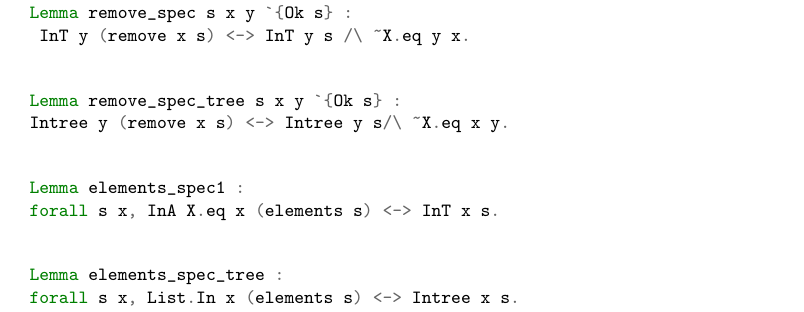}
\end{figure}

Such lemmas derived in our formalization could be useful for future works that use the standard library implementation of red-black trees.

\section{Time complexity of continuous double auctions}

We now show that continuous double auctions take $\Omega(n \log n)$ time for processing $2n$ orders, proving Theorem~\ref{thm:lowerBound}.

\begin{proof}[Proof of Theorem~\ref{thm:lowerBound}]We reduce the task of sorting $n$ positive numbers $\{b_1,\cdots, b_n\}$ in decreasing order to continuous double auctions. Our order book consists of $2n$ orders each with quantity $1$. The first $n$ orders are bids, whereas the last n orders are asks. The $i^\text{th}$ bid has a price $b_i>0$. Each ask has a limit price of $0$.

Observe that each pair of bid and ask is matchable. 
Since the quantity of each order is $1$, there will be exactly $n$ matchings produced each of quantity $1$. The first matching will be produced at the $n+1^\text{th}$ step, where the most-competitive bid will be picked out. At the next step the second most competitive bid is picked out and so on. Hence, the matchings are produced in the sorted order of competitiveness. Since continuous double auctions prioritize matchable orders by price first (and then by time), the trade book generated will be sorted by price. 

We now simply use the folklore sorting lower bound of $\Omega(n \log n)$, to get our result. 
\end{proof}
\subsection{Experimental findings}

We show the running time of Process\_instruction (previous work) and eProcess\_instruction (current work) on randomly generated datasets using a python script which we enclose in the accompanying materials. Our order books generated are of sizes going from $200$ thousand to $10$ million. We stopped running the previous implementation on very large datasets as it took an inordinate amount of time. The figures appearing on page 2 represent the following data graphically.
\vspace{8pt}

\begin{tabular}[H]
  {>{\raggedleft\arraybackslash}p{2 cm}%
  >{\raggedleft\arraybackslash}p{3.5cm}%
  >{\raggedleft\arraybackslash}p{3.5cm}%
  }
\hline
No. of orders & Time (previous) & Time (current) \\
\hline
200,000  &  1 min 01 s  &  2 s  \\
400,000  &  5 min 51 s  &  4 s  \\
600,000  &  16 min 43 s  &  6 s \\
800,000  &  33 min 51 s  &  8 s \\
1,000,000  &  53 min 47 s  &  10 s \\
1,200,000  &  1 hr 29 min  &  12 s \\
1,400,000  &  2 hr 07 min  &  14 s \\
1,600,000  &  2 hr 45 min  &  16 s \\
1,800,000  &  3 hr 33 min  &  18 s \\
2,000,000  &  4 hr 20 min  &  20 s \\ %\cline{2-2}
\hline
4,000,000  &   &  41 s \\
6,000,000  &   &  1 min 03 s \\
8,000,000  &   &  1 min 27 s \\
10,000,000  &   &  1 min 47 s\\
\hline
\end{tabular}

\vspace{10pt}

As part of the supplementary materials~\cite{ecda}, we provide a demonstration that runs both the old and new implementations on two randomly generated datasets and reports the running times. One needs an OCaml compiler to be able to run this demonstration.

\section{Conclusions}

In this work, we provide an efficient and formally correct implementation for continuous double auctions. This has a drastic impact on the running time as demonstrated by our analysis and leads to fast checkers which are extremely useful for finding errors and monitoring existing exchanges.
\section*{Acknowledgement}
 We wish to thank Mohimenul Kabir of the National University of Singapore for presenting our work at the conference.

\bibliographystyle{plain}
\bibliography{main}

\end{document}